\theoremstyle{plain} \newtheorem{defi}{Definition}
\theoremstyle{plain} \newtheorem{prop}{Proposition}
\theoremstyle{plain} \newtheorem{teo}[prop]{Theorem}
\theoremstyle{plain} 
\theoremstyle{plain} \newtheorem{eje}{Example}
\theoremstyle{remark} \newtheorem*{nota}{Notation}
\theoremstyle{plain} \newtheorem*{obse}{Remark}
\theoremstyle{plain} 
\theoremstyle{plain} \newtheorem{procedure}{Procedure}
\title{Market viability and completeness for multinomial models}
\author{Nahuel I. Arca}
\begin{document}

\maketitle

\begin{abstract}
    In this paper we aim to study viability and completeness in finite markets. In order to do that, we characterize the set of equivalent martingale measures of two-period markets as convex combinations of a finite number of martingale measures. We provide an algorithm for finding such measures, that can be applied in other problems of convex geometry, and represents the starting point for a study of such characterizations of convex sets' intersections. We apply these results to the study of a discrete-time version of the Korn-Kreer-Lenssen model, and give an example of the limitations of using discrete-time models to understand continuous-time ones.
\end{abstract}

\section{Introduction}

The starting point to the subject of mathematical finance tends to be the binomial model \cite{cox1979option,rendleman1979two,sharpe1999investment}. In such model, time is discrete and, at each instant of time, the future bifurcates: either the stock price goes up by a factor $u>1$, or the stock price goes down by a factor $d<1$. This model is shown in figure \ref{fig:binomial_model}.
\begin{figure}[h]
    \centering
\begin{tikzpicture}
    \node (0) at (0,0) {$S(0)$};
    \node (0u) at (1*2,1*0.5) {$uS(0)$};
    \node (0d) at (1*2,-1*0.5) {$dS(0)$};
    \node (0uu) at (2*2,2*0.5) {$u^2S(0)$};
    \node (0ud) at (2*2,0*0.5) {$udS(0)$};
    \node (0dd) at (2*2,-2*0.5) {$d^2S(0)$};

    \path [->] (0) edge (0u);
    \path [->] (0) edge (0d);
    \path [->] (0u) edge (0uu);
    \path [->] (0u) edge (0ud);
    \path [->] (0d) edge (0ud);
    \path [->] (0d) edge (0dd);

    \node (t0) at (0,2*0.5+1) {$t=0$};
    \node (t1) at (1*2,2*0.5+1) {$t=1$};
    \node (t2) at (2*2,2*0.5+1) {$t=2$};  
\end{tikzpicture}
    \caption{Binomial model with 3 trading dates.}
    \label{fig:binomial_model}
\end{figure}
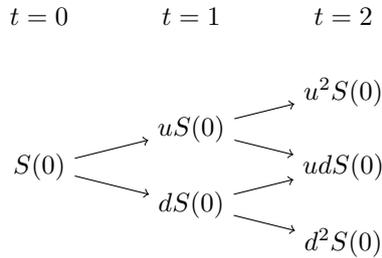

Another variant of this model is the trinomial tree \cite{boyle1986option,clifford2010pricing,haug2007complete,hull2003options,rubinstein2000relation,rubinstein1999rubinstein}. In this model, time is still discrete but, at each instant of time, the future branches in three possibilities: those two of the binomial model, and a third one consisting in the price remaining constant. A version of this model with $ud=1$ is shown in figure \ref{fig:trinomial_model}.
\begin{figure}[h]
    \centering
\begin{tikzpicture}
    \node (0) at (0,0) {$S(0)$};
    \node (0u) at (1*2,1) {$uS(0)$};
    \node (00) at (1*2,0) {$S(0)$};
    \node (0d) at (1*2,-1) {$dS(0)$};
    \node (0uu) at (2*2,2) {$u^2S(0)$};
    \node (0u0) at (2*2,1) {$uS(0)$};
    \node (0ud) at (2*2,0) {$S(0)$};
    \node (0d0) at (2*2,-1) {$dS(0)$};
    \node (0dd) at (2*2,-2) {$d^2S(0)$};

    \path [->] (0) edge (0u);
    \path [->] (0) edge (00);
    \path [->] (0) edge (0d);
    \path [->] (0u) edge (0uu);
    \path [->] (0u) edge (0u0);
    \path [->] (0u) edge (0ud);
    \path [->] (00) edge (0u0);
    \path [->] (00) edge (0ud);
    \path [->] (00) edge (0d0);
    \path [->] (0d) edge (0ud);
    \path [->] (0d) edge (0d0);
    \path [->] (0d) edge (0dd);

    \node (t0) at (0,2+1) {$t=0$};
    \node (t1) at (1*2,2+1) {$t=1$};
    \node (t2) at (2*2,2+1) {$t=2$};  
\end{tikzpicture}
    \caption{Trinomial tree model with 3 trading dates and $ud=1$.}
    \label{fig:trinomial_model}
\end{figure}
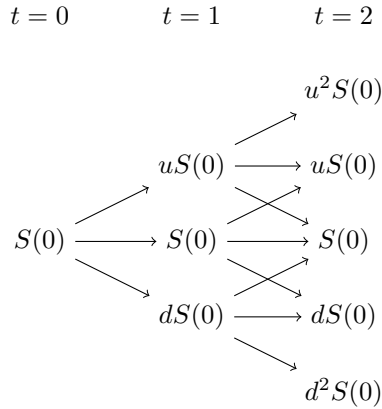

In each one of these models, each path starting at time $t=0$ and finishing at time $t=2$ represents a possible world. In the binomial model, there are four possible worlds: the elements of the sample space $\Omega=\{UU,UD,DU,DD\}$. The information disposable at time $t=i$ is represented by the partition $\mathcal{P}_i$, and we have $\mathcal{P}_0=\{\Omega\}$, $\mathcal{P}_1=\{\{UU,UD\},\{UD,DD\}\}$ and $\mathcal{P}_2=\{\{UU\},\{UD\},\{UD\},\{DD\}\}$. All of these can be represented by the rooted tree depicted in figure \ref{fig:info_binomial}.
\begin{figure}[h]
    \centering
\begin{tikzpicture}
    \node (0) at (0,0) {\textbullet};
    \node (0u) at (1*2,1) {\textbullet};
    \node (0d) at (1*2,-1) {\textbullet};
    \node (0uu) at (2*2,1.5) {$UU$};
    \node (0ud) at (2*2,0.5) {$UD$};
    \node (0du) at (2*2,-0.5) {$DU$};
    \node (0dd) at (2*2,-1.5) {$DD$};

    \path [->] (0) edge (0u);
    \path [->] (0) edge (0d);
    \path [->] (0u) edge (0uu);
    \path [->] (0u) edge (0ud);
    \path [->] (0d) edge (0du);
    \path [->] (0d) edge (0dd);

    \node (t0) at (0,2*0.5+1) {$t=0$};
    \node (t1) at (1*2,2*0.5+1) {$t=1$};
    \node (t2) at (2*2,2*0.5+1) {$t=2$};  
\end{tikzpicture}
    \caption{Information tree of the binomial model with 3 trading dates.}
    \label{fig:info_binomial}
\end{figure}
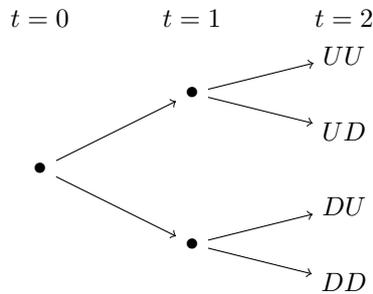

In the same manner, the information in the trinomial tree model can be represented by the rooted tree depicted in figure \ref{fig:info_trinomial}.
\begin{figure}[h]
    \centering
\begin{tikzpicture}
    \node (0) at (0,0) {\textbullet};
    \node (0u) at (1*2,1.5) {\textbullet};
    \node (00) at (1*2,0) {\textbullet};
    \node (0d) at (1*2,-1.5) {\textbullet};
    \node (0uu) at (2*2,2) {$UU$};
    \node (0u0) at (2*2,1.5) {$UI$};
    \node (0ud) at (2*2,1) {$UD$};
    \node (00u) at (2*2,0.5) {$IU$};
    \node (000) at (2*2,0) {$II$};
    \node (00d) at (2*2,-0.5) {$ID$};
    \node (0du) at (2*2,-1) {$DU$};
    \node (0d0) at (2*2,-1.5) {$DI$};
    \node (0dd) at (2*2,-2) {$DD$};

    \path [->] (0) edge (0u);
    \path [->] (0) edge (00);
    \path [->] (0) edge (0d);
    \path [->] (0u) edge (0uu);
    \path [->] (0u) edge (0u0);
    \path [->] (0u) edge (0ud);
    \path [->] (00) edge (00u);
    \path [->] (00) edge (000);
    \path [->] (00) edge (00d);
    \path [->] (0d) edge (0du);
    \path [->] (0d) edge (0d0);
    \path [->] (0d) edge (0dd);

    \node (t0) at (0,2.5) {$t=0$};
    \node (t1) at (1*2,2.5) {$t=1$};
    \node (t2) at (2*2,2.5) {$t=2$};  
\end{tikzpicture}
    \caption{Information tree of the trinomial tree model with 3 trading dates.}
    \label{fig:info_trinomial}
\end{figure}
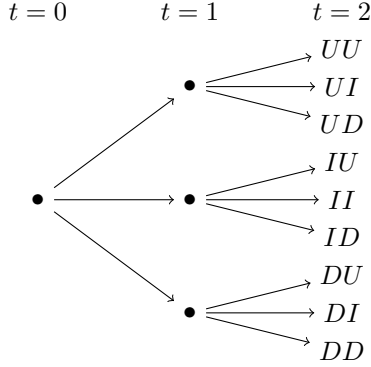

The natural generalization of these examples is any model where the information is represented by a finite rooted tree. These paths of possible futures originate in the Arrow-Debreu model of general equilibrium \cite{arrow1954existence,breeden1978prices,mckenzie1954equilibrium}. In this work, we study the conditions of market viability and completeness, for such models.

In section \ref{teo}, we give a geometric interpretation of the set of equivalent martingale measures and use Krein-Milman's theorem to characterize such set as a convex combination of a finite number of martingale measures. More importantly, we provide an algorithm for computing these generators, and another one for completing arbitrage-free markets in every possible way.

In section \ref{eje}, we show how to apply the theoretical results developed in section \ref{teo} to concrete cases.

In section \ref{apli}, we use the results of section \ref{teo} to prove some results of multinomial models. We analyze a market with a single asset and find a necessary and sufficient condition for it to be arbitrage-free. We also prove that the market is complete when the branching factor $b$ equals $2$. In the case $b=3$, we add a derivative and characterize all the derivatives that complete the market. We apply this last case to the analysis of a discrete-time version of the Korn-Kreer-Lenssen model \cite{korn1998pricing}, and give an example of complete finite markets that converge to a market with arbitrage opportunities.

\section{Theory}\label{teo}

Let's recall some definitions.

A trading strategy is called \emph{self financing} if no money goes in or out in the whole process. In a market with terminal date $T$, an \emph{arbitrage opportunity} is a self financing trading strategy such that the value $V$ of the portfolio satisfies
\begin{align*}
    V(0)&=0\text{ ,}\\
    \mathbb{P}(V(T)\geq 0)&=1\\
    \text{and}\quad\mathbb{P}(V(T)>0)&>0\text{ .}
\end{align*}
A market is \emph{arbitrage-free} (or \emph{viable}) if it doesn't admit any arbitrage opportunities. For the purposes of this paper, a market is \emph{complete} if it is arbitrage-free and for every random variable $X$ there is a trading strategy such that the terminal value $V(T)$ of the portfolio satisfies $V(T)=X$ with probability $1$.

Following \cite{bjork2009arbitrage} but with different notation, let's consider a market model with trading dates $t=0,1$, where the sample space is $\Omega=\{1,2,\ldots,b\}$ with $\mathbb{P}(\{\omega\})>0$ for all $1\leq\omega\leq b$, financial assets $S_0,\ldots, S_n$, and $S_0(\omega,t)>0$ for all $(\omega,t)\in\Omega\times\{0,1\}$.

As in \cite{bjork2009arbitrage}, we define $Z_i:=S_i/S_0$,
\begin{align*}
    D&=\begin{pmatrix}
            S_0(1,1)&\cdots &S_0(b,1)\\
            \vdots&&\vdots\\
            S_n(1,1)&\cdots &S_n(b,1)
        \end{pmatrix}\qquad\text{and}\\
    D^Z&=\begin{pmatrix}
            Z_0(1,1)&\cdots &Z_0(b,1)\\
            \vdots&&\vdots\\
            Z_n(1,1)&\cdots &Z_n(b,1)
        \end{pmatrix}\text{ .}
\end{align*}

To study the way in which arbitrage-free markets can be completed, we will use two results from \cite{bjork2009arbitrage}:
\begin{prop}\label{arbitrage-free}
    The market is arbitrage-free if and only if there exists $q\in\mathbb{R}_{>0}^b$ such that
    \begin{align}
        D^Zq=Z(0)\text{ .}\label{bjork}
    \end{align}
\end{prop}
\begin{prop}\label{complete}
    Let's assume the market is arbitrage-free. The market is complete if and only if $D$ has rank $b$.
\end{prop}

Bearing in mind the general theory of financial markets, if $q\in\mathbb{R}_{\geq 0}^b$ is a solution of \eqref{bjork} we will call it a \emph{martingale measure} (MM), and if $q\in\mathbb{R}_{>0}^b$ we will call it an \emph{equivalent martingale measure} (EMM). Following \cite{delbaen2006mathematics}, we will denote by $\mathcal{M}^e(S)$ the set of EMM's and by $\mathcal{M}^a(S)$ the set of all MM's. Observe that $\mathcal{M}^e(S)\subset \mathcal{M}^a(S)\subset\mathbb{R}^b$ and
\begin{align}
    \mathcal{M}^e(S)=\mathcal{M}^a(S)\cap\mathbb{R}_{>0}^b\text{ .}\label{equiv_mart}
\end{align}

Completing a market consists in adding assets $S_{n+1},\ldots,S_m$ such that the new market is complete. Through proposition \ref{arbitrage-free}, this requires that the original market be arbitrage-free. With the insights of propositions \ref{arbitrage-free} and \ref{complete}, the following procedure for completing an arbitrage-free market becomes clear:
\begin{procedure}
    \begin{enumerate}
        \item Fix an EMM $p$.
        \item Add rows $(S_k(\omega,1))_{k,\omega}$ for $k>n$ to the matrix $D$ until its row rank is $b$.
        \item Define
        \begin{align*}
            S_k(0):=S_0(0)\langle (Z_k(\omega,1))_{\omega},p\rangle
        \end{align*}
        for all the new rows of $D$.
    \end{enumerate}
\end{procedure}
Therefore, the ways in which the market can be completed are constrained by the set $\mathcal{M}^e(S)$. As pointed out by equation \eqref{equiv_mart}, the characterization of the set $\mathcal{M}^a(S)$ leads automatically to the characterization of the set $\mathcal{M}^e(S)$. The following subsection is about the characterization of $\mathcal{M}^a(S)$.

\subsection{Classification of martingale measures}

Observe that an MM $p$ is a solution of
\begin{align}
    \begin{pmatrix}
            Z_1(1,1)&\cdots &Z_1(b,1)\\
            \vdots&&\vdots\\
            Z_n(1,1)&\cdots &Z_n(b,1)
        \end{pmatrix}p=\begin{pmatrix}
            Z_1(0)\\
            \vdots\\
            Z_n(0)
        \end{pmatrix}\label{mart_sign}
\end{align}
in the standard $b-1$-simplex $\Delta^{b-1}\subset\mathbb{R}^b$. We will call $A(S)$ the affine space defined by the solutions of this system (possibly $A(S)=\emptyset$). Therefore,
\begin{align}
    \mathcal{M}^a(S)=A(S)\cap \Delta^{b-1}\text{ .}\label{mart}
\end{align}

It is immediate from equation \eqref{mart} that $\mathcal{M}^a(S)\subset\mathbb{R}^b$ is a compact convex set. A characterization of such sets is given by the Krein-Milman theorem:
\begin{teo}[Krein-Milman]
    A compact convex subset of a Hausdorff locally convex topological vector space is equal to the closed convex hull of its extreme points.
\end{teo}
As a matter of fact, we only need the particular case in which the space considered is $\mathbb{R}^b$ (this result is presented as the Minkowski-Carathéodory theorem in page 126 of \cite{simon2011convexity}). Let's recall the definition of \emph{extreme point} \cite{narici2010topological}.
\begin{nota}
    Given distinct $y,z\in\mathbb{R}^b$,
    \begin{align*}
        (y,z):=\{ty+(1-t)z:t\in (0,1)\}\text{ .}
    \end{align*}
\end{nota}
\begin{defi}
    Let $K$ be a convex set. A convex subset $F$ of $K$ is a \emph{face} of $K$ if for any $x\in F$, $x\in (y,z)$ for $y, z \in K$ implies $y,z \in F$. $x$ is an \emph{extreme point} of $K$ if $\{x\}$ is a face of $K$.
\end{defi}
To find the extreme points of $\mathcal{M}^a(S)$ we will use the following property of faces \cite{weis2024note}:
\begin{prop}
    Let $K,L\subset\mathbb{R}^b$ be convex sets and let $F$ be a nonempty face of $K\cap L$. Then $F$ is the intersection of a face of $K$ and a face of $L$.
\end{prop}
The reciprocal follows immediately from the definition of face. Hence, the faces of $\mathcal{M}^a(S)$ are the intersections of the faces of $\Delta^{b-1}$ with the faces of $A(S)$.

\begin{obse}
    The faces of $\Delta^{b-1}$ are given by
    \begin{align*}
        \text{ch}(\{e_i:i\in\ I\})
    \end{align*}
    where $\text{ch}$ denotes the convex hull and $I\subset\{1,\ldots,b\}$ (this is a one-to-one correspondence). The faces of $A(S)$ are $\emptyset$ and $A(S)$.
\end{obse}

\begin{obse}
    If $F$ and $G$ are faces of $\Delta^{b-1}$ such that $F\cap A(S)$ is a singleton and $G\subset F$, then $G\cap A(S)\in\{\emptyset,F\cap A(S)\}$.
\end{obse}

From these remarks, it follows an algorithm to find the extreme points of $\Delta^{b-1}\cap A(S)$:
\begin{procedure}
\begin{enumerate}
    \item We start by checking if $e_i\in A(S)$ for all $1\leq i\leq b$: we put those that belong to $A(S)$ in a list of generators, and those that don't belong to $A(S)$ in a list of 0-simplices.
    \item We build all the possible 1-simplices whose boundaries are all in the list of 0-simplices. For each one of these 1-simplices, we check if they intersect $A(S)$: we add the intersections thus obtained to the list of generators and, with the 1-simplices that don't intersect $A(S)$ we make a list of 1-simplices.
    \item We build all the possible 2-simplices whose boundaries are all in the list of 1-simplices. For each one of these 2-simplices, we check if they intersect $A(S)$: we add the intersections thus obtained to the list of generators and, with the 2-simplices that don't intersect $A(S)$ we make a list of 2-simplices, etc.
\end{enumerate}
\end{procedure}

Because $\Delta^{b-1}$ doesn't contain $b$-simplices, the process stops in step $b$ or before. A Python implementation of this algorithm can be found in \url{https://github.com/nahueliarca/multinomial/tree/main}.

\begin{eje}
To illustrate the procedure, we will apply it to figure \ref{fig:algorithm}, a case where $b=3$. Such figure represents the plane that contains $\Delta^2$ ($\{x_1+x_2+x_3=1\}$). In this particular case, applying the algorithm results in the following steps:
\begin{enumerate}
    \item We check that $e_i\not\in A(S)$ for $i=1,2,3$. Now, the list of generators is $\{\}$ and the list of $0$-simplices is $\{e_1,e_2,e_3\}$.
    \item We build all the possible 1-simplices whose boundaries are all in the list of 0-simplices, namely $e_1e_2$, $e_1e_3$ and $e_2e_3$. $e_1e_2\cap A(S)=\{\}$, $e_1e_3\cap A(S)=\{p^1\}$ and $e_2e_3\cap A(S)=\{p^2\}$. Now, the list of generators is $\{p^1,p^2\}$ and the list of $1$-simplices is $\{e_1e_2\}$.
    \item We build all the possible 2-simplices whose boundaries are all in the list of 1-simplices: there are none. Therefore, the algorithm ends here and the list of generators is $\{p^1,p^2\}$.
\end{enumerate}
\begin{figure}[h]
    \centering
    \begin{tikzpicture}
        \filldraw[fill=gray!30] (0,0) node[anchor=east]{$e_1$} -- (2,0) node[anchor=west]{$e_2$} -- (1,1.73205080757) node[anchor=south]{$e_3$} -- cycle;
        \draw[dotted] (-1,0.86602540378) -- (3,0.86602540378) node[anchor=west]{$A(S)\cap\{x_1+x_2+x_3=1\}$};
        \filldraw (0.5,0.86602540378) circle (2pt) node[anchor=-45]{$p^1$};
        \filldraw (1.5,0.86602540378) circle (2pt) node[anchor=-135]{$p^2$};
    \end{tikzpicture}
    \caption{The generators are $p^1$ and $p^2$.}
    \label{fig:algorithm}
\end{figure}
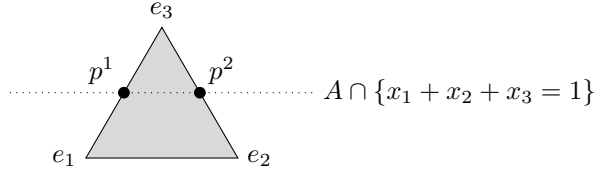
\end{eje}

This process can stop even before, with the help of the following results.
\begin{prop}
    Let $F$ be a closed face of a compact convex set $K$ such that $F\neq\emptyset$ and $G\subsetneq F$ face of $K$ implies $G=\emptyset$. Then $F$ is a singleton.
\end{prop}
\begin{proof}
    By the Krein-Milman theorem, $F$ must be the convex hull of its extreme points. But the faces of $F$ are faces of $K$. Therefore, $F$ must be a singleton.
\end{proof}
This proposition guarantees that when considering a $k$-simplex $H$ in step $k+1$, if a solution of \eqref{mart_sign} exists in $H$, then it must be unique.

\begin{prop}\label{dim_sing}
    Let $K\subset\mathbb{R}^b$ be a convex set. Let $F$ be a face of $K$ such that $F\cap A(S)$ is a singleton and $G\subsetneq F$ face of $K$ implies $G\cap A(S)=\emptyset$. Then $\dim F\leq b-\dim A(S)$.
\end{prop}
The proof of this proposition requires some definitions, notations and propositions from \cite{weis2024note}.
\begin{nota}
    Given a set $K\subset\mathbb{R}^b$, we will denote its affine hull by $\text{aff}(K)$.
\end{nota}
\begin{nota}
    Given a convex set $K\subset\mathbb{R}^b$ and $x\in K$, we denote the smallest face of $K$ containing $x$ by $F_K(x)$.
\end{nota}
\begin{defi}
    Let $K\subset\mathbb{R}^b$ be a convex set and $x\in K$. $x$ is an \emph{internal point} of $K$ if for all $y\in K$ there exists $\varepsilon>0$ such that $x+\varepsilon (x-y)\in K$.
\end{defi}
\begin{defi}
    Let $K\subset\mathbb{R}^b$ be a convex set and $x\in K$, we define
    \begin{align*}
        C_K(x)=\{y\in\mathbb{R}^b|\exists\varepsilon>0:x+\varepsilon (y-x)\in K\}\text{ .}
    \end{align*}
\end{defi}
\begin{prop}
    Let $K\subset\mathbb{R}^b$ be a convex set. Every point $x\in K$ is an internal point of $F_K(x)$.
\end{prop}
\begin{prop}
    Let $K\subset\mathbb{R}^b$ be a convex set. For all $x\in K$, the following assertions are equivalent.
    \begin{enumerate}
        \item The point $x$ is an internal point of $K$.
        \item We have $C_K(x)=\text{aff}(K)$.
    \end{enumerate}
\end{prop}
\begin{proof}[Proof of proposition \ref{dim_sing}]
    Let $x\in F\cap A(S)$. Then $A(S)-x$ and $\text{aff}(F)-x$ are linear subspaces of $\mathbb{R}^b$ and
    \begin{align*}
        \dim((A(S)-x)+(\text{aff}(F)-x))+\dim((A(S)-x)\cap(\text{aff}(F)-x))\\
        =\dim(A(S)-x)+\dim(\text{aff}(F)-x)\text{ .}
    \end{align*}
    That is to say,
    \begin{align*}
        \dim F&=\dim((A(S)-x)+(\text{aff}(F)-x))+\dim((A(S)-x)\cap(\text{aff}(F)-x))-\dim A(S)\\
        &\leq b+\dim(A(S)\cap \text{aff}(F))-\dim A(S)\text{ .}
    \end{align*}
    On the other hand, observe that $F_K(x)\subset F$. As a matter of fact, $F_K(x)=F$: otherwise $x\in F_K(x)\cap A(S)=\emptyset$. Therefore $x$ is an internal point of $F$, hence $\text{aff}(F)=C_F(x)$.
    
    Let $y\in A(S)\cap\text{aff}(F)$. Then there exists $\varepsilon>0$ such that $z:=x+\varepsilon (y-x)\in F\Rightarrow z\in F\cap A(S)\Rightarrow z=x\Rightarrow y=x$. Thus $\dim(A(S)\cap \text{aff}(F))=0$.
\end{proof}
From proposition \ref{dim_sing}, we can deduce that it is unnecessary to check if the convex hulls generated by $b+2-\dim A(S)$ or more points intersect $A(S)$.

Finally, the generators obtained by this algorithm are necessary. This follows from the next easy to prove proposition, stated in \cite{narici2010topological}.
\begin{prop}
    Let $K$ be a convex set and $x\in K$. $x$ is an extreme point of K if and only if $K\backslash\{x\}$ is convex.
\end{prop}

\begin{eje}
The generators obtained by the algorithm may not be affinely independent. Let's consider the system
\begin{align*}
    \begin{pmatrix}
        1 & -1 & -1 & 1 & 0 & 0\\
        1 & -3 & -2 & 0 & -2 & 0\\
        1 & 1 & 2 & 0 & 0 & 2
    \end{pmatrix}p=\begin{pmatrix}
        0\\
        -1\\
        1
    \end{pmatrix}\text{ .}
\end{align*}
The generators in this case turn out to be
\begin{align*}
\begin{pmatrix}
    1/2 & 1/2 & 0 & 0 & 0 & 0
\end{pmatrix}\text{ ,}\\
\begin{pmatrix}
    0 & 0 & 1/2 & 1/2 & 0 & 0
\end{pmatrix}\text{ ,}\\
\begin{pmatrix}
    0 & 0 & 0 & 0 & 1/2 & 1/2
\end{pmatrix}\text{ ,}\\
\begin{pmatrix}
    1/3 & 0 & 1/3 & 0 & 1/3 & 0
\end{pmatrix}\\
\text{and }\begin{pmatrix}
    0 & 1/3 & 0 & 1/3 & 0 & 1/3
\end{pmatrix}\text{ ,}
\end{align*}
and these generators are not affinely independent, because the average of the first three ones equals the average of the last two ones. Using the algorithm, we can easily check that none of these generators is a convex combination of the others.
\end{eje}

Once we get the generators $p^1,p^2,\ldots,p^k$, the MM's are convex combinations of them, that is to say of the form $\alpha_1 p^1+\ldots+\alpha_k p^k$, or in matrix form
\begin{align*}
    \begin{pmatrix}
        p^1_1&\cdots&p^k_1\\
        \vdots&&\vdots\\
        p^1_b&\cdots&p^k_b
    \end{pmatrix}\begin{pmatrix}
        \alpha_1\\
        \vdots\\
        \alpha_k
    \end{pmatrix}\text{ .}
\end{align*}
For them to be equivalent, they have to satisfy $\alpha_1 p^1_i+\ldots+\alpha_k p^k_i\neq 0$ for all $i$, that is to say that $(\alpha_1,\ldots,\alpha_k)\notin\langle (p^1_i,\ldots,p^k_i)\rangle^{\perp}$ for all $i$. Taking advantage of the fact that $p^j_i\geq 0$ for all $j$, the condition over $\alpha_j$ becomes clear: $\alpha_j>0$ for some $j$ such that $p^j_i\neq 0$.\newline

When we want to add assets to complete the market, let's say $S_{n+1},\ldots,S_m$, to keep it arbitrage-free we have the restriction
\begin{align*}
    \begin{pmatrix}
        Z_{n+1}(1,1)&\cdots&Z_{n+1}(b,1)\\
        \vdots&&\vdots\\
        Z_{m}(1,1)&\cdots&Z_{m}(b,1)
    \end{pmatrix}\begin{pmatrix}
        p^1_1&\cdots&p^k_1\\
        \vdots&&\vdots\\
        p^1_b&\cdots&p^k_b
    \end{pmatrix}\begin{pmatrix}
        \alpha_1\\
        \vdots\\
        \alpha_k
    \end{pmatrix}=\begin{pmatrix}
        Z_{n+1}(0)\\
        \vdots\\
        Z_{m}(0)
    \end{pmatrix}\text{ ,}
\end{align*}
where the values of $(\alpha_1,\ldots,\alpha_k)$ have the restrictions previously mentioned. Once chosen the values of $S_{l}(\omega,1)$ to complete the market, the range of values for $(S_{n+1}(0),\ldots,S_m(0))$ is given by
\begin{align*}
    \begin{pmatrix}
        S_{n+1}(0)\\
        \vdots\\
        S_{m}(0)
    \end{pmatrix}=S_0(0)
    \begin{pmatrix}
        \langle Z_{n+1}(1),p^1\rangle &\cdots&\langle Z_{n+1}(1),p^k\rangle\\
        \vdots&&\vdots\\
        \langle Z_{m}(1),p^1\rangle &\cdots&\langle Z_{m}(1),p^k\rangle
    \end{pmatrix}\begin{pmatrix}
        \alpha_1\\
        \vdots\\
        \alpha_k
    \end{pmatrix}\text{ ,}
\end{align*}
where the values of $(\alpha_1,\ldots,\alpha_k)$ have the restrictions previously mentioned.

\subsection{More trading dates}

In a market with a finite number of trading dates and a finite number of forks, the analysis of viability and completeness can be carried out with the tools we have developed. Let's suppose the trading dates are $t=0,1,\ldots,T$. In such a market, the sample space $\Omega$ is finite and there is a family $\{\mathcal{F}_t\}_{t=0}^T$ of $\sigma$-algebras over $\Omega$, that represent the information at each instant of time: if the state of the world is $\omega$ and $A\in\mathcal{F}_t$, at time $t$ we can say if $\omega\in A$ or not. Here, as the sample space is finite, each $\sigma$-algebra over $\Omega$ corresponds to a partition over $\Omega$, so there is a family $\{\mathcal{P}_t\}_{t=0}^T$ of partitions over $\Omega$ that represent the information at each instant of time.

Given $t<T$, for each $A\in\mathcal{P}_t$, we can consider the submarket with trading dates $t$ and $t+1$, and a sample space given by
\begin{align*}
    \Omega_{t,A}=\{B\in\mathcal{P}_{t+1}:B\subset A\}\text{ .}
\end{align*}
We will call such submarket the \emph{$(t,A)$ component} of the \emph{total market}. The total market is arbitrage-free if and only if all of its components are arbitrage-free (similar statements can be found in \cite{delbaen2006mathematics} and \cite{mishura2021discrete}). Analogously, the total market is complete if and only if all of its components are complete. So we can use the previously developed tools to analyze if all the components are arbitrage-free and complete, and from that deduce the viability and completeness of the total market.

If the total market is arbitrage-free but not complete, then we can complete it by completing all of its components. Every asset of the $(t,A)$ component can be thought as an asset of the total market that trades in dates $t$ and $t+1$ in the case that $\omega\in A$. So, every asset we add to complete the components is an asset we add to complete the total market. In the literature, assets are considered to be traded the whole time, from date $0$ to $T$; assets traded in dates $t$ and $t+1$ in the case that $\omega\in A$ can be extended by investing in $S_0$ between other dates and in other scenarios, which doesn't change anything else.

\section{Concrete examples}\label{eje}

\begin{eje}
Let's consider the system
\begin{align*}
\begin{pmatrix}
18 & -6 & -6 & 75\\
99 & -33 & -33 & 291
\end{pmatrix}p=\begin{pmatrix}
15\\
123
\end{pmatrix}\text{ .}
\end{align*}
Applying the method we don't get any generator, which indicates that there are no solutions of the system in the simplex. Nevertheless, it is immediate that each row by itself has solutions in the simplex, and $(2,-1/3,-1/3,-1/3)$ is a solution in the hyperplane $x_1+x_2+x_3+x_4=1$.
\end{eje}

\begin{eje}
Let's consider the system
\begin{align*}
\begin{pmatrix}
-3&   1& -15&   1\\
-3&   1& -7&   1
\end{pmatrix}p=\begin{pmatrix}
-3\\
-3
\end{pmatrix}\text{ .}
\end{align*}

Applying the method we get vector $(1,0,0,0)$ as sole generator, which indicates that the solution of the system in the simplex is just that vertex. Nevertheless, it is immediate that each row by itself has more solutions in the simplex, and $(1,1,0,-1)$ is another solution in the hyperplane $x_1+x_2+x_3+x_4=1$.\newline
The only MM is then $(1,0,0,0)$, but it is not equivalent.
\end{eje}

\begin{eje}
Let's consider the system
\begin{align*}
\begin{pmatrix}
-1&   -1& -3&   3\\
1&   1& -3&   3
\end{pmatrix}p=\begin{pmatrix}
-1\\
1
\end{pmatrix}\text{ .}
\end{align*}

Applying the method we get vectors $(1,0,0,0)$ and $(0,1,0,0)$ as the generators, which indicates that the solution set of the system in the simplex is the edge that connects these vertices. Nevertheless, it is immediate that each row by itself has more solutions in the simplex.\newline
The MM's are the convex combinations of $(1,0,0,0)$ and $(0,1,0,0)$, but none of them is equivalent, because in all of these combinations the third and fourth components are null.
\end{eje}

\begin{eje}
Let's consider the system
\begin{align*}
\begin{pmatrix}
2&   0& 0&   0
\end{pmatrix}p=
1\text{ .}
\end{align*}

Applying the method we get vectors $(1/2,1/2,0,0)$, $(1/2,0,1/2,0)$ and $(1/2,0,0,1/2)$ as the generators. The MM's are the convex combinations of these vectors, that is to say
\begin{align*}
    \alpha_1(1/2,1/2,0,0)+\alpha_2(1/2,0,1/2,0)+(1-\alpha_1-\alpha_2)(1/2,0,0,1/2)
\end{align*}
with $0\leq\alpha_1$, $0\leq\alpha_2$ and $\alpha_1+\alpha_2\leq 1$. To find the equivalent ones, we observe that
\begin{enumerate}
    \item From the first components of each generator, no one is null, so the first component of the combination is not null if and only if $\alpha_1>0$, $\alpha_2>0$ or $\alpha_1+\alpha_2<1$.
    \item From the second components of each generator, just the one from the first generator is not null, so the second component of the combination is not null if and only if $\alpha_1>0$.
    \item From the third components of each generator, just the one from the second generator is not null, so the third component of the combination is not null if and only if $\alpha_2>0$.
    \item From the fourth components of each generator, just the one from the third generator is not null, so the fourth component of the combination is not null if and only if $\alpha_1+\alpha_2<1$.
\end{enumerate}

Then the EMM's are the combinations with $0<\alpha_1$, $0<\alpha_2$ and $\alpha_1+\alpha_2<1$.\newline
To complete the market we can add two assets, extending the matrix to
\begin{align*}
\begin{pmatrix}
1&   1& 1&   1\\
2&   0& 0&   0\\
0&   1& 0&   0\\
0&   0& 1&   0
\end{pmatrix}\text{ .}
\end{align*}
Multiplying the last two rows of this matrix with the matrix of generators
\begin{align*}
\begin{pmatrix}
1/2& 1/2& 1/2\\
1/2&   0&   0\\
0&   1/2&   0\\
0&     0& 1/2
\end{pmatrix}
\end{align*}
we get the matrix
\begin{align*}
\begin{pmatrix}
1/2&   0& 0\\
0&   1/2& 0
\end{pmatrix}\text{ .}
\end{align*}
So the extended system would be
\begin{align*}
\begin{pmatrix}
2&   0& 0&   0\\
0&   1& 0&   0\\
0&   0& 1&   0
\end{pmatrix}p=\begin{pmatrix}
1\\
\alpha_1/2\\
\alpha_2/2
\end{pmatrix}\text{ ,}
\end{align*}
with the restrictions $\alpha_1>0$, $\alpha_2>0$ and $\alpha_1+\alpha_2<1$. For example, taking $\alpha_1=\alpha_2=1/3$, we get the system
\begin{align*}
\begin{pmatrix}
2&   0& 0&   0\\
0&   1& 0&   0\\
0&   0& 1&   0
\end{pmatrix}p=\begin{pmatrix}
1\\
1/6\\
1/6
\end{pmatrix}\text{ .}
\end{align*}

Using the algorithm we can easily check that the only generator is $(1/2,1/6,1/6,1/6)$. Therefore, that is the only MM, and none of its components are null so it is equivalent. Thus, this market is arbitrage-free and complete.
\end{eje}

\section{Application}\label{apli}

The simplest model we can consider is the case where there is only one asset. In that case, it is immediate that $A(S)=\mathbb{R}^{b}\Rightarrow\mathcal{M}^e(S)=\Delta^{b-1}$ and thus:
\begin{itemize}
    \item the market is arbitrage-free;
    \item the market is complete if and only if $b=1$.
\end{itemize}

Let's assume that there exists $r\in\mathbb{R}$ such that $S_0(t+1)/S_0(t)=1+r$ for all $t$, and let's add another asset $S_1$ with the following features. $S_1(0)>0$, we have a set of coefficients $0<f_1<f_2<\ldots<f_b$ and in each instant of time $t$, we have that
\begin{align*}
\mathbb{P}(S_1(t+1)=f_hS_1(t))>0\,\forall h\qquad\text{and}\qquad\sum_{h=1}^b\mathbb{P}(S_1(t+1)=f_hS_1(t))=1\text{ .}
\end{align*}
Let's fix the variable $t$. From what was previously seen, the market is arbitrage-free if and only if $Z_1(t)$ is in the open convex hull generated by $\{Z_1(\omega,t+1)\}_{\omega}$, that is to say, if
\begin{align*}
    \min(\{Z(\omega,t+1)\}_{\omega})&<Z(t)<\max(\{Z(\omega,t+1)\}_{\omega})\\
    \Leftrightarrow f_1&<1+r<f_b\text{ .}
\end{align*}
Besides, the market is complete if and only if it is arbitrage-free and
\begin{align*}
    \begin{pmatrix}
            1&\cdots &1\\
            f_1&\cdots &f_b
        \end{pmatrix}
\end{align*}
has row rank $b$, which happens only if $b\leq 2$.\newline
In the case $b=3$, we need the row rank to be $3$, so we can add a derivative to complete the market:
\begin{align*}
    \begin{pmatrix}
            1&1&1\\
            f_1&f_2&f_3\\
            c_1&c_2&c_3
        \end{pmatrix}
\end{align*}
must have rank $3$. Triangulating this matrix we get
\begin{align*}
    \begin{pmatrix}
            1&1&1\\
            0&f_2-f_1 & f_3-f_1\\
            0&0&c_3-c_1-\frac{c_2-c_1}{f_2-f_1}(f_3-f_1)
        \end{pmatrix}\text{ ,}
\end{align*}
so the condition for the row rank to be 3 is
\begin{align}
    c_3-c_1-\frac{c_2-c_1}{f_2-f_1}(f_3-f_1)&\neq 0\nonumber\\
    \Leftrightarrow c_1(f_3-f_2)+c_2(f_1-f_3)+c_3(f_2-f_1)&\neq 0\text{ .}\label{comp_3}
\end{align}

Let's compute the EMM's.

In the case $b=3$, the market is arbitrage-free if $f_1<1+r<f_3$; we will assume these inequalities. The system is
\begin{align*}
    \begin{pmatrix}
            f_1&f_2&f_3
        \end{pmatrix}p=1+r\text{ .}
\end{align*}
In that case, neither $(1,0,0)$ nor $(0,0,1)$ are solutions, but
\begin{align*}
    \left(\frac{f_3-(1+r)}{f_3-f_1},0,\frac{1+r-f_1}{f_3-f_1}\right)
\end{align*}

is a solution.\newline
If $f_2=1+r$, then $(0,1,0)$ is a solution and every solution that is a probability measure is a convex combination of these. Then, the MM's are obtained as
\begin{align*}
    \begin{pmatrix}
        \frac{f_3-(1+r)}{f_3-f_1}&0\\
        0&1\\
        \frac{1+r-f_1}{f_3-f_1}&0
    \end{pmatrix}\begin{pmatrix}
        \alpha_1\\
        \alpha_2
    \end{pmatrix}\text{ .}
\end{align*}

From the first and third rows we get that $\alpha_1>0$, and from the second row we get that $\alpha_2>0$. Then, the EMM's are
\begin{align*}
    \left(p\frac{f_3-(1+r)}{f_3-f_1},1-p,p\frac{1+r-f_1}{f_3-f_1}\right)\text{ ,}
\end{align*}
with $0<p<1$. To complete the market, we have that $c_0$, the derivative's price at time $t$, is a convex combination
\begin{align*}
    \alpha_1\frac{1}{1+r}\begin{pmatrix}
        c_1&c_2&c_3
    \end{pmatrix}\begin{pmatrix}
        \frac{f_3-(1+r)}{f_3-f_1}\\
        0\\
        \frac{1+r-f_1}{f_3-f_1}
    \end{pmatrix}+\alpha_2\frac{1}{1+r}\begin{pmatrix}
        c_1&c_2&c_3
    \end{pmatrix}\begin{pmatrix}
        0\\
        1\\
        0
    \end{pmatrix}\\
    =\alpha_1\frac{c_1(f_3-(1+r))+c_3(1+r-f_1)}{(f_3-f_1)(1+r)}+\alpha_2\frac{c_2}{1+r}\text{ ,}
\end{align*}
where $\alpha_1>0$ and $\alpha_2>0$. That is to say, $c_0$ is in the open interval between
\begin{align*}
    \frac{c_1(f_3-(1+r))+c_3(1+r-f_1)}{(f_3-f_1)(1+r)}
\end{align*}
and
\begin{align*}
    \frac{c_2}{1+r}\text{ .}
\end{align*}

If $f_2<1+r$, then $(0,1,0)$ is not a solution either, but
\begin{align*}
    \left(0,\frac{f_3-(1+r)}{f_3-f_2},\frac{1+r-f_2}{f_3-f_2}\right)
\end{align*}

is a solution. There are no solutions of the form $(p,1-p,0)$, so the MM's are obtained as
\begin{align*}
    \begin{pmatrix}
        \frac{f_3-(1+r)}{f_3-f_1}&0\\
        0&\frac{f_3-(1+r)}{f_3-f_2}\\
        \frac{1+r-f_1}{f_3-f_1}&\frac{1+r-f_2}{f_3-f_2}
    \end{pmatrix}\begin{pmatrix}
        \alpha_1\\
        \alpha_2
    \end{pmatrix}\text{ .}
\end{align*}

From the first row we get that $\alpha_1>0$ and from the second row that $\alpha_2>0$. The third row does not impose any additional restriction. Then, the EMM's are
\begin{align}
    \left(p\frac{f_3-(1+r)}{f_3-f_1},(1-p)\frac{f_3-(1+r)}{f_3-f_2},p\frac{1+r-f_1}{f_3-f_1}+(1-p)\frac{1+r-f_2}{f_3-f_2}\right)\text{ ,}\label{emm2}
\end{align}
with $0<p<1$. To complete the market, we have that $c_0$ is a convex combination
\begin{align*}
    \alpha_1\frac{c_1(f_3-(1+r))+c_3(1+r-f_1)}{(f_3-f_1)(1+r)}+\alpha_2\frac{c_2(f_3-(1+r))+c_3(1+r-f_2)}{(f_3-f_2)(1+r)}\text{ ,}
\end{align*}
where $\alpha_1>0$ and $\alpha_2>0$. That is to say, $c_0$ is in the open interval between
\begin{align*}
    \frac{c_1(f_3-(1+r))+c_3(1+r-f_1)}{(f_3-f_1)(1+r)}
\end{align*}
and
\begin{align*}
    \frac{c_2(f_3-(1+r))+c_3(1+r-f_2)}{(f_3-f_2)(1+r)}\text{ .}
\end{align*}

If $f_2>1+r$, then $(0,1,0)$ is not a solution either, but
\begin{align*}
    \left(\frac{f_2-(1+r)}{f_2-f_1},\frac{1+r-f_1}{f_2-f_1},0\right)
\end{align*}

is a solution. There are no solutions of the form $(0,p,1-p)$, so the MM's are obtained as
\begin{align*}
    \begin{pmatrix}
        \frac{f_3-(1+r)}{f_3-f_1}&\frac{f_2-(1+r)}{f_2-f_1}\\
        0&\frac{1+r-f_1}{f_2-f_1}\\
        \frac{1+r-f_1}{f_3-f_1}&0
    \end{pmatrix}\begin{pmatrix}
        \alpha_1\\
        \alpha_2
    \end{pmatrix}\text{ .}
\end{align*}
From the second row we get that $\alpha_2>0$ and from the third row that $\alpha_1>0$. The first row does not impose any additional restriction. To complete the market, we have that $c_0$ is a convex combination
\begin{align*}
    \alpha_1\frac{c_1(f_3-(1+r))+c_3(1+r-f_1)}{(f_3-f_1)(1+r)}+\alpha_2\frac{c_1(f_2-(1+r))+c_2(1+r-f_1)}{(f_2-f_1)(1+r)}\text{ ,}
\end{align*}
where $\alpha_1>0$ and $\alpha_2>0$. That is to say, $c_0$ is in the open interval between
\begin{align*}
    \frac{c_1(f_3-(1+r))+c_3(1+r-f_1)}{(f_3-f_1)(1+r)}
\end{align*}
and
\begin{align*}
    \frac{c_1(f_2-(1+r))+c_2(1+r-f_1)}{(f_2-f_1)(1+r)}\text{ .}
\end{align*}

\subsection{The Korn–Kreer–Lenssen model}

The Korn-Kreer-Lenssen model \cite{chenkorn,korn1998pricing} is a continuous-time model characterized by an asset with a price process given by
\begin{align*}
    \mathbb{P}(dS=1)&=\lambda S\,dt\text{ ,}\\
    \mathbb{P}(dS=-1)&=\eta S\,dt\\
    \text{and}\quad\mathbb{P}(dS=0)&=1-(\lambda+\eta)S\,dt\text{ ,}
\end{align*}
for some fixed $\lambda,\eta>0$. Trading is continuous in $[0,T]$, and the interest rate $r$ is constant. In order to complete it, a put option $F^*$ with strike $1$ and maturity $T$ is added.

By partitioning $[0,T]$ in $n$ equal intervals, we get a finite version of the KKL model:
\begin{align*}
    \mathbb{P}(S(t+\Delta t)=S(t)+1)&=\lambda S(t)\Delta t\text{ ,}\\
    \mathbb{P}(S(t+\Delta t)=S(t)-1)&=\eta S(t)\Delta t\\
    \text{and}\quad\mathbb{P}(S(t+\Delta t)=S(t))&=1-(\lambda+\eta)S(t)\Delta t\text{ ,}
\end{align*}
where $\Delta t:=T/n$, and the rate of interest is $r\Delta t$. For these probabilities to make sense when $n\to\infty$, the condition $(\lambda+\eta)T<1$ must be required.

If $S(t)=0$, we fall under the case $b=1$, with a trivially unique EMM. If not, we fall under the case $b=3$ with
\begin{align*}
    f_1=1-S(t)^{-1}\text{ ,}\qquad f_2=1\qquad\text{and}\qquad f_3=1+S(t)^{-1}\text{ .}
\end{align*}
Therefore, the market is arbitrage-free if and only if
\begin{align*}
    1-S(t)^{-1}<1+r\Delta t<1+S(t)^{-1}\\
    \Leftrightarrow\Delta t<\frac{1}{|r|S(t)}\text{ .}
\end{align*}
For this to hold for each $t<T$, it is necessary and sufficient to hold under the case $S(t)=S(0)+n-1$. Therefore, the condition translates to
\begin{align*}
    \Delta t&<\frac{1}{|r|(S(0)+n-1)}\\
    \Leftrightarrow T|r|(S(0)+n-1)&<n\text{ .}
\end{align*}
This holds for $n$ sufficiently large if $T|r|<1$.\newline
To complete the market, another asset has to be added. If at time $t$ the original asset has price $S(t)$, then we will assume the new asset has price $F^n(t,S(t))$ for some function $F^n$. Let $\Gamma\subset (T/n)\{0,1,\ldots,n\}\times\mathbb{N}_0$ be the smallest set that satisfies:
\begin{enumerate}
    \item $(0,S(0))\in\Gamma$
    \item If $(t,k)\in\Gamma$ and $t<T$, then
    \begin{itemize}
        \item $(t+\Delta t,0)\in\Gamma$ if $k=0$.
        \item $(t+\Delta t,k-1),(t+\Delta t,k),(t+\Delta t,k+1)\in\Gamma$ if $k\neq 0$.
    \end{itemize}
\end{enumerate}
For each $(t,k)\in\Gamma$ such that $t<T$ and $k\neq 0$, as seen in this section, we have several EMM's for the transitions
\begin{align*}
    \tilde{\mathbb{P}}(S(t+\Delta t)=k+i|S(t)=k)
\end{align*}
for $i\in\{-1,0,1\}$. Let's choose any of these EMM's for each of these $(t,k)$. The chosen EMM determines the values of $F^n(t,k)$ for $t<T$, given the values of $F^n(T,k)$:
\begin{align}
    F^n(t,k)=\begin{cases}
        (1+r\Delta t)^{-1}\sum_{i=-1}^1\tilde{\mathbb{P}}(S(t+\Delta t)=k+i|S(t)=k)F^n(t+\Delta t,k+i)&\text{ if }k\neq 0\\
        (1+r\Delta t)^{-1}F^n(t+\Delta t,0)&\text{ if }k=0
    \end{cases}\text{ .}
\end{align}
In order for the market to be complete, for each $(t,k)\in\Gamma$ such that $t<T$ and $k\neq 0$, equation \eqref{comp_3} requires
\begin{align}
    F^n(t+\Delta t,k-1)k^{-1}-2F^n(t+\Delta t,k)k^{-1}+F^n(t+\Delta t,k+1)k^{-1}&\neq 0\nonumber\\
    \Leftrightarrow F^n(t+\Delta t,k-1)-2F^n(t+\Delta t,k)+F^n(t+\Delta t,k+1)&\neq 0\text{ .}\label{comp_3_2}
\end{align}
In \cite{korn1998pricing} the new asset is assumed to be a put option with strike $1$ and maturity $T$. This corresponds to
\begin{align*}
    F^*(T,k)=\begin{cases}
        0&\text{if }k\geq 1\\
        1&\text{if }k=0
    \end{cases}\text{ .}
\end{align*}
If we use these values in our discrete-time version, then the market will not be complete.
\begin{prop}
    Given any values $F^*(T,k)$, the values of $F^n(T,k)$ can be taken arbitrarily close to $F^*(T,k)$ while also getting a complete market.
\end{prop}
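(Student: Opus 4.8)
The plan is to exploit that completeness is an \emph{open, generic} condition on the terminal data $\{F^n(T,k)\}_{(T,k)\in\Gamma}$. By the reduction of the ``More trading dates'' subsection, the total market is complete if and only if every component is, which by \eqref{comp_3_2} amounts to finitely many non-degeneracy conditions, one for each node $(t,k)\in\Gamma$ with $t<T$ and $k\neq 0$, namely
\[
F^n(t+\Delta t,k-1)-2F^n(t+\Delta t,k)+F^n(t+\Delta t,k+1)\neq 0 .
\]
First I would record that the pricing recursion is \textbf{linear and homogeneous} in the terminal values: iterating the backward formula expresses each $F^n(t,k)$ as a fixed linear combination $\sum_{(T,j)\in\Gamma} c^{t,k}_{j}\,F^n(T,j)$ of the terminal data, with coefficients $c^{t,k}_j$ determined by the chosen EMM and the discount factors and independent of the terminal values themselves. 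Consequently each condition above reads $L_{t,k}\bigl(\{F^n(T,j)\}_j\bigr)\neq 0$, where $L_{t,k}$ is a linear functional on the finite-dimensional space of terminal data. Since we have already fixed an EMM for the underlying's transitions and priced $F^n$ by it, arbitrage-freeness is automatically preserved, so only these completeness conditions remain to be arranged.

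Next I would show that \emph{each} functional $L_{t,k}$ is not identically zero, which is the crux. Let $m=(T-t-\Delta t)/\Delta t$ be the number of remaining steps from level $t+\Delta t$. From the node $(t+\Delta t,k+1)$ the ``all up'' path reaches the terminal node $(T,k+1+m)$, and since the EMM assigns strictly positive probability to each up-move and the discount factors are positive, this path stays at levels $\geq k+1\geq 2$ (never entering the absorbing branch) and contributes a strictly positive coefficient $c^{t+\Delta t,k+1}_{k+1+m}>0$. On the other hand, the nodes $(t+\Delta t,k)$ and $(t+\Delta t,k-1)$ cannot reach level $k+1+m$, as their highest reachable terminal levels are only $k+m$ and $k-1+m$; hence $F^n(T,k+1+m)$ enters $L_{t,k}$ solely through the $+F^n(t+\Delta t,k+1)$ term. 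Its coefficient in $L_{t,k}$ therefore equals $c^{t+\Delta t,k+1}_{k+1+m}>0$, so $L_{t,k}\not\equiv 0$. For the top layer $t+\Delta t=T$ the functional is simply $F^n(T,k-1)-2F^n(T,k)+F^n(T,k+1)$, which is manifestly nonzero.

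Finally I would assemble the density argument. Each zero set $\{L_{t,k}=0\}$ is a proper hyperplane in the space of terminal data, hence closed with empty interior; as $\Gamma$ is finite, the ``incomplete'' locus $\bigcup_{(t,k)}\{L_{t,k}=0\}$ is a finite union of such hyperplanes and is therefore nowhere dense, with dense complement. Thus every neighborhood of the put payoff $F^*(T,\cdot)$ contains terminal data avoiding all these hyperplanes, i.e.\ values $F^n(T,k)$ arbitrarily close to $F^*(T,k)$ for which all conditions \eqref{comp_3_2} hold and the market is complete. The main obstacle is precisely the nontriviality step of the second paragraph: one must rule out any completeness condition collapsing to the contradiction $0\neq 0$, and the ``topmost reachable terminal state'' bookkeeping, resting on positivity of the EMM up-probabilities, is what guarantees this.
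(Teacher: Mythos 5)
Your proposal is correct and follows essentially the same route as the paper: express each completeness condition \eqref{comp_3_2} as the nonvanishing of a linear functional of the terminal data, prove nontriviality by isolating the coefficient of the topmost terminal node $F^n(T,k+1+m)$, reachable only from $(t+\Delta t,k+1)$ via the all-up path with strictly positive EMM probability, and conclude because a finite union of hyperplanes cannot cover a neighborhood of $F^*(T,\cdot)$. The only cosmetic difference is that you phrase the last step topologically (nowhere dense, dense complement) where the paper phrases it in terms of null sets; both are equivalent here.
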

\begin{proof}
    Let $\tilde{\Gamma}$ be the set
    \begin{align*}
        \tilde{\Gamma}:=\{(t,k)\in\Gamma:t<T,k\neq 0\}\text{ .}
    \end{align*}
    Observe that each $F^n(t,k)$ is a linear combination of values $F^n(T,k)$ through the already chosen EMM. For each $(t,k)\in\tilde{\Gamma}$, let $H_{(t,k)}$ be the set
    \begin{align*}
        H_{(t,k)}:=\{(F^n(T,l))_{l=0}^{S(0)+n}:F^n(t+\Delta t,k-1)-2F^n(t+\Delta t,k)+F^n(t+\Delta t,k+1)=0\}\\
        \subset\mathbb{R}^{S(0)+n+1}\text{ .}
    \end{align*}
    These sets are either hyperplanes or $\mathbb{R}^{S(0)+n+1}$. For a given $(t,k)$, $H_{(t,k)}=\mathbb{R}^{S(0)+n+1}$ if and only if the coefficients corresponding to each $F^n(T,l)$ in the linear combination
    \begin{align*}
        F^n(t+\Delta t,k-1)-2F^n(t+\Delta t,k)+F^n(t+\Delta t,k+1)
    \end{align*}
    are each and everyone of them $0$. This is not true, because the coefficient corresponding to $F^n(T,k+(T-t)/\Delta t)$ is not $0$:
    \begin{align*}
        F^n(t+\Delta t,k+i)=\left(1+r\Delta t\right)^{-\frac{T-t-\Delta t}{\Delta t}}\sum_j\tilde{\mathbb{P}}(S(T)=j|S(t+\Delta t)=k+i)F^n(T,j)
    \end{align*}
    and
    \begin{align*}
        \tilde{\mathbb{P}}(S(T)=j|S(t+\Delta t)=k+i)=0
    \end{align*}
    for $|k+i-j|>(T-t-\Delta t)/\Delta t$. Therefore with $j=k+(T-t)/\Delta t$,
    \begin{align*}
        \tilde{\mathbb{P}}(S(T)=j|S(t+\Delta t)=k+i)=0
    \end{align*}
    for $i\in\{-1,0\}$, but
    \begin{align*}
        \tilde{\mathbb{P}}(S(T)=j|S(t+\Delta t)=k+1)>0\text{ .}
    \end{align*}
    Hence the coefficient corresponding to $F^n(T,k+(T-t)/\Delta t)$ is
    \begin{align*}
        \left(1+r\Delta t\right)^{-\frac{T-t-\Delta t}{\Delta t}}\tilde{\mathbb{P}}(S(T)=k+(T-t)/\Delta t|S(t+\Delta t)=k+1)>0\text{ .}
    \end{align*}
    Thus, the sets $H_{(t,k)}$ are hyperplanes and therefore, they are nowhere dense. Hence, their union is meagre and, as $\mathbb{R}^{S(0)+n+1}$ is a Baire space, its complement is dense.

    Let $\varepsilon>0$ and $(F^*(T,l))_{l=0}^{S(0)+n}\in\mathbb{R}^{S(0)+n+1}$. There is a $(F^n(T,l))_{l=0}^{S(0)+n}$ such that
    \begin{align*}
        (F^n(T,l))_{l=0}^{S(0)+n}\notin\bigcup_{(t,k)\in\tilde{\Gamma}}H_{(t,k)}\\
        \text{and }|F^n(T,l)-F^*(T,l)|<\varepsilon
    \end{align*}
    for all $0\leq l\leq S(0)+n$. In this case, the market is complete because of equation \eqref{comp_3_2}.
\end{proof}

It would be desirable to take the limit $n\to\infty$, and deduce from these results the viability and completeness of the original KKL model with continuous time. Nevertheless, this step is not so simple and requires further research. To illustrate this, we provide the following example.
\begin{eje}
In this example we will show a sequence of complete markets that converge to a limit market with arbitrage opportunities. Each of these markets is characterized by a deterministic bond price process $B^n$ and a stochastic asset price process $S^n$ (the corresponding processes are $B$ and $S$ in the case of the limit market). The convergence concept we prove is given by:
\begin{enumerate}
    \item $B^n\to B$ uniformly;
    \item $S^n(0)\to S(0)$;
    \item and for each $t_1<t_2$, $\log(S^n(t_2)/S^n(t_1))$ converges in distribution to $\log(S(t_2)/S(t_1))$.
\end{enumerate}

For the markets in the sequence, let's consider the classical binomial model. Given $n\in\mathbb{N}$, let's set the trading dates $0,1/n,2/n,\ldots, 1$, with parameters $r_n=0$, $u_n=e$, $d_n=\exp(-1/n^2)$ and $p_n=1/n$, with probabilities
\begin{align*}
    \mathbb{P}(S^n(t+1/n)=u_nS^n(t))=p_n\\
    \text{and }\mathbb{P}(S^n(t+1/n)=d_nS^n(t))=1-p_n\text{ ,}
\end{align*}
and $S^n(0)=B^n(0)=1$. Observe that all these markets are complete, since $d_n<1+r_n<u_n$. The processes $S^n,B^n:\{0,1/n,\ldots, 1\}\to\mathbb{R}$ can be extended to continuous time processes in the interval $[0,1]$ by thinking about them as càdlàg functions constant between trading dates. Hence, we get
\begin{align*}
    \log S^n(t)=\sum_{i=1}^{\lfloor nt\rfloor}X_i^n\text{ ,}
\end{align*}
where $X_i^n$ are independent,
\begin{align*}
    \mathbb{P}(X_i^n=\log u_n)=p_n\\
    \text{and }\mathbb{P}(X_i^n=\log d_n)=1-p_n\text{ .}
\end{align*}
For the limit market, let $Y(t)$ be a Poisson process with parameter $1$, $S(t):=\exp(Y(t))$ and $B(t)\equiv 1$. The portfolio process $S(t)-B(t)$ is an arbitrage opportunity in this market:
\begin{align*}
    S(0)-B(0)=0\text{ ,}\\
    \mathbb{P}(S(1)-B(1)\geq 0)=\mathbb{P}(Y(1)\geq 0)=1\\
    \text{and }\mathbb{P}(S(1)-B(1)> 0)=1-e^{-1}>0\text{ .}
\end{align*}

Let $0\leq t_1<t_2\leq 1$. For the sequence of markets
\begin{align*}
    \log\left(\frac{S^n(t_2)}{S^n(t_1)}\right)=\log(u_n/d_n)Y^n_{t_1,t_2}+(\lfloor n t_2\rfloor-\lfloor n t_1\rfloor)\log d_n\text{ ,}
\end{align*}
where $Y^n_{t_1,t_2}\sim B(\lfloor n t_2\rfloor-\lfloor n t_1\rfloor,p_n)$. The characteristic function of $\log(S^n(t_2)/S^n(t_1))$ is
\begin{align*}
    \varphi_n(s)=\exp\left(is(\lfloor n t_2\rfloor-\lfloor n t_1\rfloor)\log d_n\right)\left(1-p_n+p_n\exp(is\log(u_n/d_n))\right)^{\lfloor n t_2\rfloor-\lfloor n t_1\rfloor}\text{ .}
\end{align*}
For the limit market, $\log(S(t_2)/S(t_1))\sim\text{Pois}(t_2-t_1)$ and its characteristic function is
\begin{align*}
    \varphi(s)=\exp\left((t_2-t_1)(e^{is}-1)\right)\text{ .}
\end{align*}
As $\varphi_n$ converges pointwise to $\varphi$, $\log(S^n(t_2)/S^n(t_1))$ converges in distribution to $\log(S(t_2)/S(t_1))$. Also $S^n(0)=S(0)$ and $B^n\equiv B$, and these facts taken together show that the sequence of markets converges to the limit market.
\end{eje}

\section{Conclusions}

In this work we studied the set of equivalent martingale measures (EMM's), for finite markets. We did this as a way to understand under which conditions those markets are arbitrage-free or complete, and how they can be completed.

We used a geometric interpretation of the set of EMM's to reduce the problem to one of convex geometry, and thus we observed that all the martingale measures can be described as convex combinations of a few of them. We presented an algorithm that allows to compute these generators. We also observed that those generators turn out to be necessary and sufficient.\newline
We presented a simple characterization of the EMM's based on the generators thus obtained, and an algorithm with all the possible ways of completing an arbitrage-free market. Examples of application to concrete cases were provided.

Further results were obtained for the particular cases $b=2$ and $b=3$, with a single asset. We found a necessary and sufficient condition for the market to be arbitrage-free for any value of $b$. Furthermore, we proved that the market is complete when $b=2$. In the case $b=3$, we added a derivative and characterized all the derivatives that complete the market. We used these results to analyze a discrete-time version of the Korn-Kreer-Lenssen model, giving a necessary and sufficient condition for the market to be arbitrage-free, and a way to complete the market in the spirit of \cite{korn1998pricing}.

Further research is required to close the gap between these discrete-time models and continuous-time models. Besides the example given in this paper, particular examples of taking limits on sequences of markets can be found in \cite{mishura2021discrete} and \cite{pakkanen2010microfoundations}, but a general theory seems to be lacking.

\bibliographystyle{plain}
\bibliography{main}

\end{document}